\documentclass[pra,aps,twocolumn,superscriptaddress,dvipsnames,amssymb,9pt,floatfix]{revtex4}
	
\usepackage{graphicx}
\usepackage{amsmath,amssymb,amsthm,amsfonts,mathptmx}
\usepackage{color}
\usepackage[matrix,frame,arrow]{xypic}
\usepackage{times}
\usepackage{algorithm}
\usepackage{algcompatible}
\usepackage{longtable}
\usepackage{tikz}

\theoremstyle{plain}

\newtheorem{theorem}{Theorem}

\newtheorem{lemma}{Lemma}
\theoremstyle{definition}

\newcommand{\bra}[1]{\langle#1|}
\newcommand{\ket}[1]{|#1\rangle}

\DeclareMathOperator{\Tr}{Tr}

\begin{document}

\title{A measurement driven analog of adiabatic quantum computation for frustration-free Hamiltonians}
\author{Liming Zhao}
\affiliation{Singapore University of Technology and Design,  8 Somapah Road, Singapore 487372}
\affiliation{Centre for Quantum Technologies, National University of Singapore, 3 Science Drive 2, Singapore 117543}
\author{Carlos A. P\'erez-Delgado}
\affiliation{School of Computing, University of Kent, Canterbury CT2 7NF, United Kingdom}
\author{Simon C. Benjamin}
\affiliation{Department of Materials, University of Oxford, Parks Rd, Oxford OX1 3PH, United Kingdom}
\author{Joseph F. Fitzsimons}\email{joseph\_fitzsimons@sutd.edu.sg}
\affiliation{Singapore University of Technology and Design,  8 Somapah Road, Singapore 487372}
\affiliation{Centre for Quantum Technologies, National University of Singapore, 3 Science Drive 2, Singapore 117543}
\begin{abstract}
The adiabatic quantum algorithm has drawn intense interest as a potential approach to accelerating optimization tasks using quantum computation. The algorithm is most naturally realised in systems which support Hamiltonian evolution, rather than discrete gates. We explore an alternative approach in which slowly varying measurements are used to mimic adiabatic evolution. We show that for certain Hamiltonians, which remain frustration-free all along the adiabatic path, the necessary measurements can be implemented through the measurement of random terms from the Hamiltonian. This offers a new, and potentially more viable, method of realising adiabatic evolution in gate-based quantum computer architectures. 
\end{abstract}
\maketitle


In the field of quantum computation, it has long been recognized that there exists deep connections between ground states of Hamiltonians and  problems of fundamental interest to the study of computational complexity \cite{barahona1982computational,wille1985computational}. It is known that the problem of finding the ground state of a Hamiltonian is hard even in the case of one-dimensional lattices \cite{aharonov2009power}, and that in general the $k$-local Hamiltonian problem is QMA-hard (and hence NP-hard) for any $k\geq 2$ \cite{kitaev1999quantum,kempe2006complexity}. Over the years many classical and, more recently, quantum algorithms have been proposed to address this problem \cite{white1993density,shao2006advances,verstraete2008matrix,schollwock2011density,landau2015polynomial,abrams1999quantum,aspuru2005simulated}. While all polynomial time algorithms are destined to fail, under the assumption that P$\neq$NP, such algorithms often work for Hamiltonians of practical interest.

One such quantum algorithm is the adiabatic algorithm \cite{farhi2000quantum}, which is fundamentally rooted in the adiabatic theorem \cite{born1928beweis}. Informally, the adiabatic theorem states that a system starting in the ground state of some initial Hamiltonian will stay close to the ground state of the system if the Hamiltonian is gradually changed over time, provided that this change is continuous and sufficiently slow. This means that one can prepare the ground state of an arbitrary Hamiltonian $\mathcal{H}_f$ by first preparing the ground state of some simple Hamiltonian $\mathcal{H}_I$ and then subjecting the system to a time varying Hamiltonian which slowly interpolates between $\mathcal{H}_I$ and $ \mathcal{H}_F$. In its simplest form, the adiabatic algorithm considers a linear interpolation between the initial and final Hamiltonians described by $\mathcal{H}(s)=(1-s)\mathcal{H}_I+s  \mathcal{H}_F$ for $ s \in [0,1]$, where $s$ is some simple function of time. This provides a heuristic approach for tackling satisfiability problems \cite{farhi2000quantum,farhi2001quantum}. In general, the timescale required for this evolution can be exponentially long, as it scales with the reciprocal of the gap between the ground state and first excited state of the instantaneous Hamiltonians at each point in time. This reconciles the adiabatic approach with the fact that QMA is not known to be contained in BQP, the class of problems efficiently solvable on a quantum computer. Indeed, it is now known that the adiabatic model is equivalent to circuit model quantum computation \cite{aharonov2008adiabatic}.

Due to its wide applicability as a black-box optimization technique, the adiabatic algorithm and similar techniques such as quantum annealing have emerged as one of the key use-cases for quantum processors \cite{santoro2006optimization}. The efficient implementation of such techniques raises architectural concerns, however. While adiabatic evolution is in principle possible in many monolithic quantum processor architectures, the Hamiltonians possible are often restricted to 2-local interactions according to some fixed graph \cite{bunyk2014architectural}. While techniques have been devised to overcome these limitations, they incur significant overhead \cite{lechner2015quantum,rocchetto2016stabilizers,chancellor2016direct}. The situation is far worse when one considers the case of distributed quantum computing architectures, such as many promising ion-trap and quantum dot proposals \cite{kielpinski2002architecture,loss1998quantum}, which implement entangling operations between nodes using discrete operations rather than Hamiltonian dynamics. For such systems, a direct implementation of adiabatic computation requires simulating Hamiltonian dynamics with discrete logic gates, an approach which would incur prohibitive overhead \cite{wecker2014gate}.
 
Here we show that it is possible to implement adiabatic-like evolution using relatively simple measurements provided that the Hamiltonian remains frustration free at all points along the adiabtic path. Our results are based on a connection between the adiabatic theorem and the quantum Zeno effect \cite{burgarth2013non}. We begin by presenting an alternate proof of a result due to Somma, Boixo and Knill which gave an adiabatic-like theorem for systems measured (or dephased) in the eigenbases of slowly varying Hamiltonians. We then show that for frustration-free Hamiltonians, measurement of randomly chosen individual terms of the Hamiltonian suffices to approximate measurement of the ground state, satisfying our criterion for adiabatic-like evolution. For $k$-local Hamiltonians, these measurement have constant complexity, as they correspond to projectors on at most $k$ qubits. This potentially opens the door to a direct analogue of the adiabatic algorithm well suited for distributed architectures, such as ion-trap implementations and similar systems currently under investigation \cite{monroe2013scaling,nickerson2014freely}. These results also provide some level of theoretical understanding of the mechanism behind a measurement-driven approach to SAT-solving proposed by Benjamin \cite{benjamin2015performance} which has shown promising performance in numerical experiments.

We start by considering the evolution of the state of a quantum system due to the measurement of a sequence of observables, which we treat as corresponding directly to Hamiltonians. We then prove that, provided the difference between pairs of neighbouring Hamiltonians in the sequence has sufficiently small norm compared to the energy gap between the ground state and first excited state, a system prepared in the ground state of the initial Hamiltonian will evolve to the ground state of the final Hamiltonian with high probability.

Let $\mathcal{H}_I$ and $ \mathcal{H}_F$ be the initial and final Hamiltonian respectively. Also, let $\{\mathcal{H}_n\}_{0 \leq n \leq N}$ be an ordered set of intermediate interpolating operators, such that $\mathcal{H}_0 \equiv \mathcal{H}_I$ and $\mathcal{H}_N \equiv  \mathcal{H}_F$. For simplicity, we will assume that every $\mathcal{H}_n$ is normalized such that the eigenvalues lie in the range between $0$ and $1$, with the lowest eigenvalue being exactly $0$. The assumption on the range of the eigenvalues can be made without loss of generality, as the Hamiltonians can always be rescaled by multiplying by a constant and shifted by adding a multiple of the identity. We will make no assumption regarding the degeneracy of the ground state space. Taking $\ket{\psi_0}$ to be a state in the ground state space of $\mathcal{H}_I$, and taking $\ket{\psi_{n}}$ to denote the normalized projection of $\ket{\psi_{n-1}}$ onto the ground state space of $\mathcal{H}_n$, the evolution of the system then satisfies the following constraint.

\begin{theorem}\label{thm:1}
Given a system initially in state $\ket{\psi_0}$, the state $\ket{\psi_N}$ can be obtained with probability $p \geq 1 - \epsilon$ by measuring the operators  $\mathcal{H}_n$ in sequence for $1 \leq n \leq N$, provided that
 \begin{equation}\label{eq:thm1-cond}
 \max_{1 \leq n \leq N} \left( \frac{\parallel \Delta \mathcal{H}_n \parallel^2_\infty}{{g(\mathcal{H}_n)}^2 } \right) \leq \frac{\epsilon}{N},\nonumber
\end{equation}
where $g(\mathcal{H}_n)$ is the gap between the eigenvalues corresponding to the ground state space and first excited state of $\mathcal{H}_n$, and $\Delta \mathcal{H}_n=\mathcal{H}_n-\mathcal{H}_{n-1}$. Furthermore, if at each step $n$ the measurement of $\mathcal{H}_n$ is replaced with any procedure that produces a state $\rho_{n}$, such that the trace distance from $\ket{\psi_n}\bra{\psi_n}$ is at most $\frac{\delta}{2N}$, with probability at least $\bra{\psi_{n}}\rho_{n-1}\ket{\psi_n}$, then the overall procedure yields a state $\rho_N$, with trace distance at most $\frac{\delta}{2N}$ from $\ket{\psi_N}\bra{\psi_N}$, with probability $p'\geq 1 - \epsilon -\delta$.
\end{theorem}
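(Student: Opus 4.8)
The plan is to take the exact-measurement analysis of the first part as established and to quantify only the additional error introduced by the approximate intermediate states. The backbone is the per-step estimate already contained in the theorem: writing $P_n$ for the projector onto the ground space of $\mathcal{H}_n$, the probability that an exact measurement of $\mathcal{H}_n$ applied to $\ket{\psi_{n-1}}$ collapses the system into the ground space is $q_n = \|P_n\ket{\psi_{n-1}}\|^2 = |\braket{\psi_n}{\psi_{n-1}}|^2$, and the hypothesis $\|\Delta\mathcal{H}_n\|_\infty^2/g(\mathcal{H}_n)^2 \le \epsilon/N$ yields $q_n \ge 1 - \epsilon/N$. The key observation is that $q_n = \bra{\psi_n}\ketbra{\psi_{n-1}}{\psi_{n-1}}\ket{\psi_n}$, so $q_n$ is exactly the quantity appearing in the hypothesis of the furthermore clause with the approximate input $\rho_{n-1}$ replaced by the ideal state $\ketbra{\psi_{n-1}}{\psi_{n-1}}$. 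The whole task therefore reduces to (i) controlling the discrepancy between $\bra{\psi_n}\rho_{n-1}\ket{\psi_n}$ and $q_n$, and (ii) composing the $N$ steps.

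For (i), I would use the variational characterisation of the trace distance. Since $\ketbra{\psi_n}{\psi_n}$ is a projector it satisfies $0 \le \ketbra{\psi_n}{\psi_n} \le I$, so for any $\rho_{n-1}$ within trace distance $\delta/(2N)$ of $\ketbra{\psi_{n-1}}{\psi_{n-1}}$ one has $\big|\bra{\psi_n}\rho_{n-1}\ket{\psi_n} - q_n\big| = \big|\Tr[\,\ketbra{\psi_n}{\psi_n}(\rho_{n-1} - \ketbra{\psi_{n-1}}{\psi_{n-1}})\,]\big| \le \tfrac{1}{2}\|\rho_{n-1} - \ketbra{\psi_{n-1}}{\psi_{n-1}}\|_1 \le \delta/(2N)$. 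Combined with the first-part bound, this lower-bounds the success probability of each replaced step, namely $\bra{\psi_n}\rho_{n-1}\ket{\psi_n} \ge q_n - \delta/(2N) \ge 1 - \epsilon/N - \delta/(2N)$.

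For (ii), I would compose the steps with the chain rule: the overall procedure succeeds precisely when every step succeeds, so its success probability factors as the product over $n$ of the conditional probabilities of success at step $n$ given success at all earlier steps. The step I expect to be the main obstacle is making this composition rigorous, because the guarantee at step $n$ is only a lower bound conditioned on the actual input, and, conditioned on the history of earlier successes, that input is in general a mixture of the states emitted by the step-$(n-1)$ procedure. I would dispose of this by invoking joint convexity of the trace distance to conclude that the conditional-average input still lies within $\delta/(2N)$ of $\ketbra{\psi_{n-1}}{\psi_{n-1}}$, together with linearity of $\rho \mapsto \bra{\psi_n}\rho\ket{\psi_n}$ to conclude that the per-step lower bound of (i) survives the averaging. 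Granting this, $p' \ge \prod_{n=1}^{N}\big(1 - \epsilon/N - \delta/(2N)\big)$.

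Finally I would apply the Weierstrass product inequality $\prod_n(1 - a_n) \ge 1 - \sum_n a_n$ with $a_n = \epsilon/N + \delta/(2N) \in [0,1]$, obtaining $p' \ge 1 - \epsilon - \delta/2 \ge 1 - \epsilon - \delta$. The stated bound on the final state is then immediate, since, conditioned on success at the last step, the procedure outputs $\rho_N$ within trace distance $\delta/(2N)$ of $\ketbra{\psi_N}{\psi_N}$ by hypothesis.
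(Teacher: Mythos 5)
Your proposal has one genuine gap, and it sits at the foundation: you declare the per-step estimate $q_n \geq 1 - \epsilon/N$ to be ``already contained in the theorem'' and never prove it. It is not contained in the theorem --- the hypothesis bounds $\|\Delta \mathcal{H}_n\|_\infty^2 / g(\mathcal{H}_n)^2$, and converting that into a bound on the overlap $q_n = \|P_n \ket{\psi_{n-1}}\|^2$ is the central analytic content of the result. The paper's derivation runs as follows: since every eigenvalue of $\mathcal{H}_n$ outside its ground space is at least $g(\mathcal{H}_n)$, one has the Loewner-order inequality $I - P_n \preceq \mathcal{H}_n / g(\mathcal{H}_n)$, which gives $1 - q_n = \|(I-P_n)\ket{\psi_{n-1}}\|^2 \leq \|\mathcal{H}_n \ket{\psi_{n-1}}\|^2 / g(\mathcal{H}_n)^2$; then, crucially, the normalization convention that each Hamiltonian has lowest eigenvalue exactly $0$ gives $\mathcal{H}_{n-1}\ket{\psi_{n-1}} = 0$, so $\mathcal{H}_n \ket{\psi_{n-1}} = \Delta\mathcal{H}_n \ket{\psi_{n-1}}$, and the definition of the operator norm yields $1 - q_n \leq \|\Delta\mathcal{H}_n\|_\infty^2 / g(\mathcal{H}_n)^2 \leq \epsilon/N$. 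Note that the substitution of $\Delta\mathcal{H}_n$ for $\mathcal{H}_n$ fails for a general energy offset, so the zero-ground-energy normalization is doing real work here. Since both the first claim and your \emph{furthermore} analysis lean entirely on $q_n \geq 1 - \epsilon/N$, the proposal as written proves neither; the assertion happens to be true, but it is precisely what needed proving.

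Everything downstream of that assertion is sound, and in places more careful than the paper's own proof. Your handling of the furthermore clause --- the variational bound $\left|\bra{\psi_n}\rho_{n-1}\ket{\psi_n} - q_n\right| \leq \frac{1}{2}\left\|\rho_{n-1} - \ketbra{\psi_{n-1}}{\psi_{n-1}}\right\|_1 \leq \delta/(2N)$, together with the convexity argument showing that the history-conditioned average input at step $n$ remains within $\delta/(2N)$ of the ideal state --- matches and tightens the paper, which simply writes $p'_n = p_n + \Tr\left(P_n\left(\rho_{n-1} - \ketbra{\psi_{n-1}}{\psi_{n-1}}\right)\right)$ and concludes $p'_n \geq 1 - \epsilon/N - \delta/N$, never addressing the conditioning subtlety and settling for the looser $\delta/N$ per step (your final $1 - \epsilon - \delta/2$ is correspondingly sharper than required). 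Your composition via the Weierstrass product inequality is exactly the paper's route for the exact-measurement case, where $p \geq \prod_n p_n$ is a lower bound because an intermediate failure can only remove otherwise-successful histories. So the fix is local: insert the Loewner-order derivation above, and the proof is complete and essentially the paper's.
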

\begin{proof}
Taking $P_n$ to be the projector onto the ground state space of $\mathcal{H}_n$, then the probability of successfully obtaining $\ket{\psi_n}$ from $\ket{\psi_{n-1}}$ is given by $p_n = \|P_n \ket{\psi_{n-1}}\|^2$. Then, the probability of successfully projecting onto $\ket{\psi_N}$ during the final measurement is bounded by $p \geq \textstyle\prod_n p_n$. The reason this is a bound rather than an exact equality is due to the possibility of reaching the correct final state through a sequence of measurements fails to project onto the ground state of some intermediate Hamiltonian.

Now, consider the probability of failure at step $n$, assuming that all previous measurements have successfully projected onto the ground state space of the associated Hamiltonian,
\begin{equation}
\epsilon_n = \| \left(I-P_n \right) \vert\psi_{n-1}\rangle \|^2.\nonumber
\end{equation}
This can be turned into an inequality by making use of Loewner order, noting that $\left(I-P_n\right) \leq\frac{\mathcal{H}_n}{g(\mathcal{H}_n)}I$, and hence
\begin{equation}
\epsilon_n \leq \left\| \frac{\mathcal{H}_n}{g(\mathcal{H}_n)} \vert\psi_{n-1}\rangle \right\|^2 = \left\| \frac{\Delta \mathcal{H}_n}{g(\mathcal{H}_n)} \vert\psi_{n-1}\rangle \right\|^2,\nonumber
\end{equation}
where the equality follows from the fact that $\mathcal{H}_{n-1} \ket{\psi_{n-1}} = 0$.
This can be used to bound $p_n$. By making use of the definition of the infinity norm for matrices, we arrive at
 \begin{equation}
p_n \geq 1- \frac{\parallel \Delta \mathcal{H}_n \parallel^2_\infty}{{g(\mathcal{H}_n)}^2 }.  \nonumber
\end{equation}
The final success probability is then bounded by
 \begin{equation}
p \geq  1- \sum_{n=1}^{N} \frac{\parallel \Delta \mathcal{H}_n \parallel^2_\infty}{{  g(\mathcal{H}_n)}^2 }.\nonumber
\end{equation}
Provided that Eq.~\ref{eq:thm1-cond} holds, we then have $p \geq 1 - \epsilon$ as required.

When considering the modified procedure, the modified probability of success at each step is bounded from below by $p'_n \geq \Tr(P_n \rho_{n-1})$. This can be rewritten as  $p'_n = p_n + \Tr(P_n(\rho_{n-1}-\ket{\psi_{n-1}}\bra{\psi_{n-1}}))$. Using the trace distance constraint, this implies $p'_n \geq 1-\frac{\epsilon}{N} - \frac{\delta}{N}$ and hence $p' \geq 1-\epsilon - \delta$ as required.
\end{proof} 

While Eq.~\ref{eq:thm1-cond} may appear unusual when compared to adiabatic conditions, due to the way in which $N$ appears as a reciprocal it can be transformed into a more conventional form by making the substitution $\delta_N \mathcal{H}_n = N \Delta \mathcal{H}_n$, to obtain
\begin{eqnarray}
N \geq \epsilon^{-1} \max_{1 \leq n \leq N} \left( \frac{\parallel \delta_N \mathcal{H}_n \parallel^2_\infty}{{g(\mathcal{H}_n)}^2 } \right).\nonumber
\end{eqnarray}

Suppose that for any $N$ each of the measured Hamiltonians $\mathcal{H}_n$ is chosen along a fixed continuous path $\mathcal{H}(s)$, for $0\leq s \leq 1$, through the space of Hamiltonians, such that they lie sequentially along this path at equal intervals. In this case, for large $N$ the finite difference $\delta_N \mathcal{H}_n$ tends to the derivative $\frac{d}{ds} \mathcal{H}(s)$, and is thus approximately constant for large $N$, depending only on the path through the space of Hamiltonians. Note that $N$ does not have dimensions of time, and so this equation is not directly comparable to adiabatic theorems. However, making the substitution $T = N/\max_{1 \leq n \leq N} \delta_n \mathcal{H}_n$ one obtains a more conventional adiabatic expression (similar to that in Ref.~\cite{van2001powerful}).

While the result presented above provides a link between the measurement of interpolating Hamiltonians and the adiabatic theorem, this does not imply that measurements are a viable alternative to Hamiltonian evolution for implementing adiabatic quantum computation. After all, the measurement of a Hamiltonian is a non-trivial task, and implementing it via controlled unitary evolution and phase estimation \cite{kitaev1995quantum} may provide little advantage over directly implementing adiabatic evolution. In order to increase the utility of this correspondence, we now introduce a method for efficiently projecting onto the ground state of frustration-free Hamiltonians. 

Let $ \mathcal{H}$ be a frustration-free Hamiltonian which is the sum of $m$ terms,
\begin{equation}
 \mathcal{H}=\textstyle\sum_{i=1}^{m} \omega_i H_{i}, \nonumber
\end{equation}
where every term $H_{i}$ is a tensor product of $2\times 2$ Hermitian operators. We assume that $\sum_{i=1}^{m} \omega_i =1 $ and the eigenvalues of each term is between $0$ and $1$, which can be done without loss of generality as discussed earlier. For each $H_{i}$ one can construct a POVM measurement with measurement operators $ E_i = \sqrt{I - H_{i}}$ and $\tilde E_i = \sqrt{H_{i}}$. Specifically, if the eigenvalues are either $0$ or $1$, one can construct a projective measurement with projectors $H_i$ and $I-H_i$. The lowest energy subspace is obtained when the measurement result is ${I - H_{i}}$.

Now, consider the following operation $M$ on an arbitrary quantum state $\rho$. First, an index $1 \leq i \leq m$ is selected at random with probability $\omega_i$. A POVM measurement is then performed on $\rho$ with measurement operators $ E_i$ and $\tilde E_i$. If the outcome of the measurement corresponds to application of $\tilde E_i$ then the procedure is said to fail. Otherwise, the resulting state of the system is $\rho'_i = \frac{E_i \rho  E_i^\dagger}{\text{Tr}(E_i \rho  E_i^\dagger)}$. This latter case occurs with probability $p(s|i) = \text{Tr}(E_i \rho  E_i^\dagger)$. Disregarding the choice of $i$, the output state $\rho'$ of a successful application of $M$ will be a mixed state consisting of a distribution over the various possibilities for $\rho'_i$ as follows. Let $p(s)$ be the total success probability. Since every $i$ is chosen with probability $p(i) = \omega_i$, we then have
\begin{align}
p(s) &=   \sum_{i=1}^m  \omega_i  \Tr\left( E_i \rho E_i ^\dagger\right) \nonumber\\
&= \Tr \left(\sum_{i=1}^m  \omega_i  \left(I - H_{i}\right) \rho\right)\nonumber\\
&=  1 - \Tr \left(\mathcal{H}\rho\right).\label{eq:success_prob}
\end{align}
From Bayes' theorem, the output state $\rho'$ is then given by
\begin{equation}
\rho' =   \sum_{i=1}^{m} \frac{p(i) p(s | i)}{p(s)} \rho'_i  = \frac{1}{1 - \Tr \left(\mathcal{H}\rho\right)} \sum_{i=1}^{m}\omega_i  E_i \rho E_i^\dag . \nonumber
 \end{equation}

We now show that successful application of the operation $M$ to a state $\rho$, with non-zero overlap with the ground state space, will increase the projection onto the ground state space.
 
\begin{lemma}\label{lemma:1}
	Let $\mathcal{H}$ be a frustration-free Hamiltonian, as described above. Let $P_\text{gs}$ be the projector onto the ground state space of $\mathcal{H}$. Let $\rho$ be an arbitrary density matrix and let $\rho'$ be the resulting density matrix after a successful application of the operation $M$ as defined above to $\rho$. Then,
\begin{equation}
\Tr\left(P_\text{gs} \rho'\right) = \frac{\Tr\left(P_\text{gs} \rho\right)}{1 - \Tr\left(\mathcal{H} \rho \right)},\label{eq:thm1res}
\end{equation}
and the probability that $M$ is successful is $1 - \Tr \left(\mathcal{H} \rho\right)$.
\end{lemma}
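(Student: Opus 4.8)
The plan is to work directly from the explicit expression for the post-measurement state derived just above the lemma, namely $\rho' = (1 - \Tr(\mathcal{H}\rho))^{-1}\sum_i \omega_i E_i \rho E_i^\dagger$, and reduce the claim to a single operator identity. Applying $\Tr(P_\text{gs}\,\cdot\,)$ to both sides and pulling the scalar prefactor out front, Eq.~\ref{eq:thm1res} becomes equivalent to
\begin{equation}
\sum_{i=1}^m \omega_i \Tr\!\left(P_\text{gs} E_i \rho E_i^\dagger\right) = \Tr\!\left(P_\text{gs}\rho\right).\nonumber
\end{equation}
Since each $H_i$ is Hermitian with spectrum in $[0,1]$, the operator $E_i = \sqrt{I - H_i}$ is a well-defined Hermitian operator, so $E_i^\dagger = E_i$, and cyclicity of the trace lets me rewrite the left-hand side as $\sum_i \omega_i \Tr(E_i P_\text{gs} E_i\, \rho)$. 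It therefore suffices to establish the operator identity $\sum_i \omega_i E_i P_\text{gs} E_i = P_\text{gs}$.

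The key step, and the only place where frustration-freeness enters, is to show that $E_i P_\text{gs} = P_\text{gs}$ for every $i$. First I would argue that $P_\text{gs}$ projects onto the common kernel of the terms $\{H_i\}$: any ground state $\ket{\phi}$ satisfies $\bra{\phi}\mathcal{H}\ket{\phi} = \sum_i \omega_i \bra{\phi}H_i\ket{\phi} = 0$, and because each $H_i \geq 0$ while every $\omega_i > 0$, each summand must vanish, forcing $H_i\ket{\phi} = 0$. Hence $H_i P_\text{gs} = 0$, so that $(I - H_i)$ acts as the identity on the range of $P_\text{gs}$; by the spectral calculus the same holds for its positive square root, giving $E_i P_\text{gs} = P_\text{gs}$. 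Taking adjoints (using that both $E_i$ and $P_\text{gs}$ are Hermitian) yields $P_\text{gs} E_i = P_\text{gs}$, and combining the two gives $E_i P_\text{gs} E_i = P_\text{gs}$ for each $i$. I expect the only delicate point here to be the passage from $H_i P_\text{gs} = 0$ to $E_i P_\text{gs} = P_\text{gs}$, which should be justified via the functional calculus since $E_i$ is defined through a square root rather than a polynomial in $H_i$.

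Finally I would assemble the pieces: summing over $i$ and invoking the normalization $\sum_i \omega_i = 1$ gives $\sum_i \omega_i E_i P_\text{gs} E_i = P_\text{gs}$, which establishes the required identity and hence Eq.~\ref{eq:thm1res}. The statement about the success probability requires no further work, as it is exactly the quantity computed in Eq.~\ref{eq:success_prob} above, $p(s) = 1 - \Tr(\mathcal{H}\rho)$, which is precisely the normalization appearing in the denominator of Eq.~\ref{eq:thm1res}.
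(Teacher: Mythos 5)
Your proposal is correct and follows essentially the same route as the paper's proof: pass to the explicit form of $\rho'$, use cyclicity of the trace, absorb the measurement operators into $P_\text{gs}$ via $E_i P_\text{gs} E_i^\dagger = P_\text{gs}$, sum using $\sum_i \omega_i = 1$, and read off the success probability from Eq.~\ref{eq:success_prob}. The only difference is that you spell out the step the paper compresses into ``the measurement operators can then be absorbed into $P_\text{gs}$'' --- namely that frustration-freeness together with $H_i \geq 0$ forces $H_i P_\text{gs} = 0$ and hence, by the functional calculus, $E_i P_\text{gs} = P_\text{gs}$ --- which is a welcome clarification rather than a deviation.
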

\begin{proof}
We begin by noting that 
\begin{align}
\Tr\left(P_\text{gs} \rho'\right) &=\frac{1}{1 - \Tr \left(\mathcal{H}\rho\right)}  \Tr\left( P_\text{gs} \sum_{i=1}^{m} \omega_i  E_i \rho E_i^\dag \right).\nonumber
\end{align}
Using the cyclic property of trace, this can be rewritten as 
\begin{equation}
\Tr\left(P_\text{gs} \rho'\right) =\frac{1}{1 - \Tr \left(\mathcal{H}\rho\right)}   \sum_{i=1}^{m} \omega_i \Tr\left(E_i P_\text{gs}  E_i^\dag \rho \right).
\end{equation}
The measurement operators can then be absorbed into $P_\text{gs}$. Evaluating the summation then yields Eq.~\ref{eq:thm1res} as required. The probability of success for applying $M$ was previously calculated in Eq.~\ref{eq:success_prob}.
\end{proof}

We now consider what happens when $M$ is applied not once, but some number of times $k$.
 \begin{theorem}\label{thm:2}
Let $\mathcal{H}$ be a frustration-free Hamiltonian. Let $P_\text{gs}$ be the projector onto the ground state space of $\mathcal{H}$. Let $\rho$ be a density matrix with non-zero overlap with the ground state space of $\mathcal{H}$ and let $\rho^{(k)}$ be the resulting density matrix after a successful application of the operation $M$ as defined above to $\rho$ sequentially $k$ times. Then,
  \begin{equation*}
 	\Tr\left(P_\text{gs}\rho^{(k)}\right) \geq \left(1+\left(1-g(\mathcal{H})\right)^k \left(\frac{1}{\Tr\left(P_\text{gs}\rho \right)} - 1\right)\right)^{-1}
 \end{equation*}
Furthermore, $P_\text{gs} \rho^{(k)} P_\text{gs} \propto P_\text{gs}\rho P_\text{gs}$ and the probability that all $k$ applications of $M$ are successful is at least $\Tr\left(P_\text{gs}\rho \right)$.
\end{theorem}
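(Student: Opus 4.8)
The plan is to iterate Lemma~\ref{lemma:1} while tracking the single scalar $q_j \equiv \Tr(P_\text{gs}\rho^{(j)})$, the overlap with the ground state space after $j$ successful applications of $M$, with $q_0 = \Tr(P_\text{gs}\rho)$. Lemma~\ref{lemma:1} already supplies the exact one-step update $q_{j+1} = q_j / (1 - \Tr(\mathcal{H}\rho^{(j)}))$, so the first task is to control the denominator purely in terms of $q_j$. I would use frustration-freeness, which guarantees that $\mathcal{H}$ annihilates its ground state space, together with the definition of the gap, to establish the Loewner bound $\mathcal{H} \geq g(\mathcal{H})(I - P_\text{gs})$: on the ground space both sides vanish, while on its orthogonal complement every eigenvalue of $\mathcal{H}$ is at least $g(\mathcal{H})$. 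Taking the trace against $\rho^{(j)} \geq 0$ then yields $\Tr(\mathcal{H}\rho^{(j)}) \geq g(\mathcal{H})(1 - q_j)$, and substituting this into the one-step update gives the recursion $q_{j+1} \geq q_j / (1 - g(\mathcal{H})(1 - q_j))$.

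The second step is to solve this recursion in closed form. The clean move is the substitution $r_j \equiv q_j^{-1} - 1$, which linearizes the update: the recursion rearranges to $1 + r_{j+1} \leq 1 + (1 - g(\mathcal{H})) r_j$, i.e.\ $r_{j+1} \leq (1 - g(\mathcal{H})) r_j$. Iterating gives $r_k \leq (1 - g(\mathcal{H}))^k r_0$, and rewriting in terms of $q_k = (1 + r_k)^{-1}$ reproduces exactly the stated lower bound on $\Tr(P_\text{gs}\rho^{(k)})$. I expect identifying this substitution to be the only real obstacle; once the recursion is cast in terms of $r_j$ everything reduces to an elementary induction.

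For the proportionality claim I would again lean on frustration-freeness. Since each term satisfies $H_i P_\text{gs} = 0$, the ground space lies in the eigenspace of $I - H_i$ with eigenvalue one, so $E_i P_\text{gs} = \sqrt{I - H_i}\, P_\text{gs} = P_\text{gs}$, and by Hermiticity $P_\text{gs} E_i^\dagger = P_\text{gs}$. Sandwiching the explicit form of $\rho'$ between projectors and using $\sum_i \omega_i = 1$ then collapses the sum to $P_\text{gs}\rho' P_\text{gs} = (1 - \Tr(\mathcal{H}\rho))^{-1} P_\text{gs}\rho P_\text{gs}$, so each application merely rescales $P_\text{gs}\rho P_\text{gs}$ by a positive scalar; iterating $k$ times gives $P_\text{gs}\rho^{(k)} P_\text{gs} \propto P_\text{gs}\rho P_\text{gs}$.

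Finally, for the overall success probability I would telescope. The probability that all $k$ applications succeed is the product $\prod_{j=0}^{k-1}(1 - \Tr(\mathcal{H}\rho^{(j)}))$, and the exact one-step relation from Lemma~\ref{lemma:1} gives $1 - \Tr(\mathcal{H}\rho^{(j)}) = q_j / q_{j+1}$. The product therefore telescopes to $q_0 / q_k$, and since $q_k \leq 1$ this is at least $q_0 = \Tr(P_\text{gs}\rho)$, completing the proof.
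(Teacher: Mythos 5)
Your proposal is correct and follows essentially the same route as the paper: your substitution $r_j = q_j^{-1}-1$ is exactly the paper's ratio $R_\ell = \Tr\left(P_\text{gs}\rho^{(\ell)}\right)^{-1}-1$, driven by the same gap bound $\Tr\left(\mathcal{H}\rho\right) \geq g(\mathcal{H})\left(1-\Tr\left(P_\text{gs}\rho\right)\right)$, and your proportionality argument via $E_i P_\text{gs} = P_\text{gs}$ matches the paper's (done stepwise rather than by expanding the full $k$-fold product, and handling the $\omega_i$ weights more carefully). The only genuine refinement is your telescoping product for the success probability, which yields the exact value $\Tr\left(P_\text{gs}\rho\right)/\Tr\left(P_\text{gs}\rho^{(k)}\right)$ where the paper settles for a qualitative invariance argument.
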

 \begin{proof} 
 We will consider the ratio 
 \begin{equation}
 	R_\ell = \frac{\Tr\left(\left(I-P_\text{gs}\right)\rho^{(\ell)}\right)}{\Tr\left(P_\text{gs}\rho^{(\ell)}\right)} = \Tr\left(P_\text{gs}\rho^{(\ell)}\right)^{-1} - 1.\label{eq:rldef}
 \end{equation}
 By definition $\rho^{(\ell)} = M\left(\rho^{(\ell-1)}\right)$ for all $\ell > 1$, and hence from Lemma~\ref{lemma:1} it follows that
 \begin{align}
 	R_\ell &= \left(1-\Tr\left(\mathcal{H}\rho^{(\ell-1)}\right)\right)\Tr\left(P_\text{gs}\rho^{(\ell-1)}\right)^{-1} - 1 .\nonumber
 \end{align}
Since $\Tr\left(\mathcal{H}\rho^{(\ell-1)}\right) \geq g(\mathcal{H})\left(1-\Tr\left(P_\text{gs}\rho^{(\ell-1)}\right)\right)$ this gives rise to the bound
\begin{align}
 	R_\ell &\leq \left(1-g(\mathcal{H})\right)\left(\Tr\left(P_\text{gs}\rho^{(\ell-1)}\right)^{-1} - 1\right).\nonumber
 \end{align}
Using Eq.~\ref{eq:rldef} we then arrive at the recurrence inequality 
 \begin{equation}
 	R_\ell \leq \left(1-g(\mathcal{H})\right) R_{\ell-1}.\nonumber
 \end{equation}
 Hence $R_k \leq \left(1-g(\mathcal{H})\right)^k R_0$. From Eq.~\ref{eq:rldef} we can then replace $R_k$ and $R_0$ to obtain
  \begin{equation*}
 	\Tr\left(P_\text{gs}\rho^{(k)}\right) \geq \left(1+\left(1-g(\mathcal{H})\right)^k \left(\frac{1}{\Tr\left(P_\text{gs}\rho \right)} - 1\right)\right)^{-1}\nonumber
 \end{equation*}
 as required.
 
Turning to the projection of $\rho^{(k)}$ onto the ground state space, from the definition of $M$ we have 
 \begin{align}
	\rho^{(k)} &= \frac{\sum_{i_1\ldots i_k} E_{i_k} \ldots E_{i_1} \rho E_{i_1}^\dagger \ldots E_{i_k}^\dagger}{\sum_{j_1\ldots j_k} \Tr\left(E_{j_k} \ldots E_{j_1} \rho E_{j_1}^\dagger \ldots E_{j_k}^\dagger\right)}\nonumber
\end{align}
and hence
\begin{align}
	P_\text{gs}\rho^{(k)}P_\text{gs} &\propto \sum_{i_1\ldots i_k} P_\text{gs} E_{i_k} \ldots E_{i_1} \rho E_{i_1}^\dagger \ldots E_{i_k}^\dagger P_\text{gs} \nonumber\\
	&= \sum_{i_1\ldots i_k}  E_{i_k} \ldots E_{i_1} P_\text{gs}\rho P_\text{gs} E_{i_1}^\dagger \ldots E_{i_k}^\dagger \nonumber\\
		&= P_\text{gs}\rho P_\text{gs}.\nonumber
\end{align}

The success probability for applying $M$ any number of times can be lower bounded by noting that $M$ does not alter states in the ground state space of $\mathcal{H}$. Hence the trace of the projection of $\rho$ onto this subspace provides a lower bound.
  \end{proof}

Theorem \ref{thm:2} implies that applying $M$ sufficiently many times satisfies the requirements of Theorem \ref{thm:1} for a procedure approximately projecting onto the ground state space of a Hamiltonian. This can be made quantitative by noting that if $k = \alpha/g(\mathcal{H})$ then $\left(1-g(\mathcal{H})\right)^k  \leq e^{-\alpha}$. When used in the context of Theorem \ref{thm:1} it will necessarily be the case that $\left(\frac{1}{\Tr\left(P_\text{gs}\rho \right)} - 1\right) \ll 1$. In such cases it should suffice to choose $\alpha \propto \log N$ to provide the necessary accuracy. 

The results presented above hold even for Hamiltonians with degenerate ground states and thus are broadly applicable. The combination of these results provides a means for implementing adiabatic-like dynamics using measurements of only modest complexity, at least for frustration-free Hamiltonians. This suggests that such evolution can be realised without need for Trotterisation of Hamiltonian dynamics, and provides a potentially more viable approach in quantum computers based on discrete gates, especially in the context of distributed architectures. The restriction to frustration free Hamiltonians is used to ensure that the ground state is simultaneously an eigenstate of each possible measurement. Removing this restriction represents an interesting avenue for future research.

\acknowledgments

The authors thank Daniel Burgarth and Yingkai Ouyang for helpful discussions. This material is based on research supported in part by the Singapore National Research Foundation under NRF Award No.~NRF-NRFF2013-01. JFF acknowledges support from the Air Force Office of Scientific Research under AOARD grant no.~FA2386-15-1-4082. SCB acknowledges support from EPSRC grant EP/M013243/1.

 \bibliographystyle{apsrev}
 \bibliography{mybib}

\begin{thebibliography}{30}
\expandafter\ifx\csname natexlab\endcsname\relax\def\natexlab#1{#1}\fi
\expandafter\ifx\csname bibnamefont\endcsname\relax
  \def\bibnamefont#1{#1}\fi
\expandafter\ifx\csname bibfnamefont\endcsname\relax
  \def\bibfnamefont#1{#1}\fi
\expandafter\ifx\csname citenamefont\endcsname\relax
  \def\citenamefont#1{#1}\fi
\expandafter\ifx\csname url\endcsname\relax
  \def\url#1{\texttt{#1}}\fi
\expandafter\ifx\csname urlprefix\endcsname\relax\def\urlprefix{URL }\fi
\providecommand{\bibinfo}[2]{#2}
\providecommand{\eprint}[2][]{\url{#2}}

\bibitem[{\citenamefont{Barahona}(1982)}]{barahona1982computational}
\bibinfo{author}{\bibfnamefont{F.}~\bibnamefont{Barahona}},
  \bibinfo{journal}{Journal of Physics A: Mathematical and General}
  \textbf{\bibinfo{volume}{15}}, \bibinfo{pages}{3241} (\bibinfo{year}{1982}).

\bibitem[{\citenamefont{Wille and Vennik}(1985)}]{wille1985computational}
\bibinfo{author}{\bibfnamefont{L.~T.} \bibnamefont{Wille}} \bibnamefont{and}
  \bibinfo{author}{\bibfnamefont{J.}~\bibnamefont{Vennik}},
  \bibinfo{journal}{Journal of Physics A: Mathematical and General}
  \textbf{\bibinfo{volume}{18}}, \bibinfo{pages}{L419} (\bibinfo{year}{1985}).

\bibitem[{\citenamefont{Aharonov et~al.}(2009)\citenamefont{Aharonov,
  Gottesman, Irani, and Kempe}}]{aharonov2009power}
\bibinfo{author}{\bibfnamefont{D.}~\bibnamefont{Aharonov}},
  \bibinfo{author}{\bibfnamefont{D.}~\bibnamefont{Gottesman}},
  \bibinfo{author}{\bibfnamefont{S.}~\bibnamefont{Irani}}, \bibnamefont{and}
  \bibinfo{author}{\bibfnamefont{J.}~\bibnamefont{Kempe}},
  \bibinfo{journal}{Communications in Mathematical Physics}
  \textbf{\bibinfo{volume}{287}}, \bibinfo{pages}{41} (\bibinfo{year}{2009}).

\bibitem[{\citenamefont{Kitaev}(1999)}]{kitaev1999quantum}
\bibinfo{author}{\bibfnamefont{A.}~\bibnamefont{Kitaev}},
  \bibinfo{journal}{Talk at AQIP} \textbf{\bibinfo{volume}{99}}
  (\bibinfo{year}{1999}).

\bibitem[{\citenamefont{Kempe et~al.}(2006)\citenamefont{Kempe, Kitaev, and
  Regev}}]{kempe2006complexity}
\bibinfo{author}{\bibfnamefont{J.}~\bibnamefont{Kempe}},
  \bibinfo{author}{\bibfnamefont{A.}~\bibnamefont{Kitaev}}, \bibnamefont{and}
  \bibinfo{author}{\bibfnamefont{O.}~\bibnamefont{Regev}},
  \bibinfo{journal}{SIAM Journal on Computing} \textbf{\bibinfo{volume}{35}},
  \bibinfo{pages}{1070} (\bibinfo{year}{2006}).

\bibitem[{\citenamefont{White}(1993)}]{white1993density}
\bibinfo{author}{\bibfnamefont{S.~R.} \bibnamefont{White}},
  \bibinfo{journal}{Physical Review B} \textbf{\bibinfo{volume}{48}},
  \bibinfo{pages}{10345} (\bibinfo{year}{1993}).

\bibitem[{\citenamefont{Shao et~al.}(2006)\citenamefont{Shao, Molnar, Jung,
  Kussmann, Ochsenfeld, Brown, Gilbert, Slipchenko, Levchenko, O?Neill
  et~al.}}]{shao2006advances}
\bibinfo{author}{\bibfnamefont{Y.}~\bibnamefont{Shao}},
  \bibinfo{author}{\bibfnamefont{L.~F.} \bibnamefont{Molnar}},
  \bibinfo{author}{\bibfnamefont{Y.}~\bibnamefont{Jung}},
  \bibinfo{author}{\bibfnamefont{J.}~\bibnamefont{Kussmann}},
  \bibinfo{author}{\bibfnamefont{C.}~\bibnamefont{Ochsenfeld}},
  \bibinfo{author}{\bibfnamefont{S.~T.} \bibnamefont{Brown}},
  \bibinfo{author}{\bibfnamefont{A.~T.} \bibnamefont{Gilbert}},
  \bibinfo{author}{\bibfnamefont{L.~V.} \bibnamefont{Slipchenko}},
  \bibinfo{author}{\bibfnamefont{S.~V.} \bibnamefont{Levchenko}},
  \bibinfo{author}{\bibfnamefont{D.~P.} \bibnamefont{O?Neill}},
  \bibnamefont{et~al.}, \bibinfo{journal}{Physical Chemistry Chemical Physics}
  \textbf{\bibinfo{volume}{8}}, \bibinfo{pages}{3172} (\bibinfo{year}{2006}).

\bibitem[{\citenamefont{Verstraete et~al.}(2008)\citenamefont{Verstraete, Murg,
  and Cirac}}]{verstraete2008matrix}
\bibinfo{author}{\bibfnamefont{F.}~\bibnamefont{Verstraete}},
  \bibinfo{author}{\bibfnamefont{V.}~\bibnamefont{Murg}}, \bibnamefont{and}
  \bibinfo{author}{\bibfnamefont{J.~I.} \bibnamefont{Cirac}},
  \bibinfo{journal}{Advances in Physics} \textbf{\bibinfo{volume}{57}},
  \bibinfo{pages}{143} (\bibinfo{year}{2008}).

\bibitem[{\citenamefont{Schollw{\"o}ck}(2011)}]{schollwock2011density}
\bibinfo{author}{\bibfnamefont{U.}~\bibnamefont{Schollw{\"o}ck}},
  \bibinfo{journal}{Annals of Physics} \textbf{\bibinfo{volume}{326}},
  \bibinfo{pages}{96} (\bibinfo{year}{2011}).

\bibitem[{\citenamefont{Landau et~al.}(2015)\citenamefont{Landau, Vazirani, and
  Vidick}}]{landau2015polynomial}
\bibinfo{author}{\bibfnamefont{Z.}~\bibnamefont{Landau}},
  \bibinfo{author}{\bibfnamefont{U.}~\bibnamefont{Vazirani}}, \bibnamefont{and}
  \bibinfo{author}{\bibfnamefont{T.}~\bibnamefont{Vidick}},
  \bibinfo{journal}{Nature Physics} \textbf{\bibinfo{volume}{11}},
  \bibinfo{pages}{566} (\bibinfo{year}{2015}).

\bibitem[{\citenamefont{Abrams and Lloyd}(1999)}]{abrams1999quantum}
\bibinfo{author}{\bibfnamefont{D.~S.} \bibnamefont{Abrams}} \bibnamefont{and}
  \bibinfo{author}{\bibfnamefont{S.}~\bibnamefont{Lloyd}},
  \bibinfo{journal}{Physical Review Letters} \textbf{\bibinfo{volume}{83}},
  \bibinfo{pages}{5162} (\bibinfo{year}{1999}).

\bibitem[{\citenamefont{Aspuru-Guzik et~al.}(2005)\citenamefont{Aspuru-Guzik,
  Dutoi, Love, and Head-Gordon}}]{aspuru2005simulated}
\bibinfo{author}{\bibfnamefont{A.}~\bibnamefont{Aspuru-Guzik}},
  \bibinfo{author}{\bibfnamefont{A.~D.} \bibnamefont{Dutoi}},
  \bibinfo{author}{\bibfnamefont{P.~J.} \bibnamefont{Love}}, \bibnamefont{and}
  \bibinfo{author}{\bibfnamefont{M.}~\bibnamefont{Head-Gordon}},
  \bibinfo{journal}{Science} \textbf{\bibinfo{volume}{309}},
  \bibinfo{pages}{1704} (\bibinfo{year}{2005}).

\bibitem[{\citenamefont{Farhi et~al.}(2000)\citenamefont{Farhi, Goldstone,
  Gutmann, and Sipser}}]{farhi2000quantum}
\bibinfo{author}{\bibfnamefont{E.}~\bibnamefont{Farhi}},
  \bibinfo{author}{\bibfnamefont{J.}~\bibnamefont{Goldstone}},
  \bibinfo{author}{\bibfnamefont{S.}~\bibnamefont{Gutmann}}, \bibnamefont{and}
  \bibinfo{author}{\bibfnamefont{M.}~\bibnamefont{Sipser}},
  \bibinfo{journal}{arXiv preprint quant-ph/0001106}  (\bibinfo{year}{2000}).

\bibitem[{\citenamefont{Born and Fock}(1928)}]{born1928beweis}
\bibinfo{author}{\bibfnamefont{M.}~\bibnamefont{Born}} \bibnamefont{and}
  \bibinfo{author}{\bibfnamefont{V.}~\bibnamefont{Fock}},
  \bibinfo{journal}{Zeitschrift f{\"u}r Physik} \textbf{\bibinfo{volume}{51}},
  \bibinfo{pages}{165} (\bibinfo{year}{1928}).

\bibitem[{\citenamefont{Farhi et~al.}(2001)\citenamefont{Farhi, Goldstone,
  Gutmann, Lapan, Lundgren, and Preda}}]{farhi2001quantum}
\bibinfo{author}{\bibfnamefont{E.}~\bibnamefont{Farhi}},
  \bibinfo{author}{\bibfnamefont{J.}~\bibnamefont{Goldstone}},
  \bibinfo{author}{\bibfnamefont{S.}~\bibnamefont{Gutmann}},
  \bibinfo{author}{\bibfnamefont{J.}~\bibnamefont{Lapan}},
  \bibinfo{author}{\bibfnamefont{A.}~\bibnamefont{Lundgren}}, \bibnamefont{and}
  \bibinfo{author}{\bibfnamefont{D.}~\bibnamefont{Preda}},
  \bibinfo{journal}{Science} \textbf{\bibinfo{volume}{292}},
  \bibinfo{pages}{472} (\bibinfo{year}{2001}).

\bibitem[{\citenamefont{Aharonov et~al.}(2008)\citenamefont{Aharonov, Van~Dam,
  Kempe, Landau, Lloyd, and Regev}}]{aharonov2008adiabatic}
\bibinfo{author}{\bibfnamefont{D.}~\bibnamefont{Aharonov}},
  \bibinfo{author}{\bibfnamefont{W.}~\bibnamefont{Van~Dam}},
  \bibinfo{author}{\bibfnamefont{J.}~\bibnamefont{Kempe}},
  \bibinfo{author}{\bibfnamefont{Z.}~\bibnamefont{Landau}},
  \bibinfo{author}{\bibfnamefont{S.}~\bibnamefont{Lloyd}}, \bibnamefont{and}
  \bibinfo{author}{\bibfnamefont{O.}~\bibnamefont{Regev}},
  \bibinfo{journal}{SIAM review} \textbf{\bibinfo{volume}{50}},
  \bibinfo{pages}{755} (\bibinfo{year}{2008}).

\bibitem[{\citenamefont{Santoro and Tosatti}(2006)}]{santoro2006optimization}
\bibinfo{author}{\bibfnamefont{G.~E.} \bibnamefont{Santoro}} \bibnamefont{and}
  \bibinfo{author}{\bibfnamefont{E.}~\bibnamefont{Tosatti}},
  \bibinfo{journal}{Journal of Physics A: Mathematical and General}
  \textbf{\bibinfo{volume}{39}}, \bibinfo{pages}{R393} (\bibinfo{year}{2006}).

\bibitem[{\citenamefont{Bunyk et~al.}(2014)\citenamefont{Bunyk, Hoskinson,
  Johnson, Tolkacheva, Altomare, Berkley, Harris, Hilton, Lanting, Przybysz
  et~al.}}]{bunyk2014architectural}
\bibinfo{author}{\bibfnamefont{P.~I.} \bibnamefont{Bunyk}},
  \bibinfo{author}{\bibfnamefont{E.~M.} \bibnamefont{Hoskinson}},
  \bibinfo{author}{\bibfnamefont{M.~W.} \bibnamefont{Johnson}},
  \bibinfo{author}{\bibfnamefont{E.}~\bibnamefont{Tolkacheva}},
  \bibinfo{author}{\bibfnamefont{F.}~\bibnamefont{Altomare}},
  \bibinfo{author}{\bibfnamefont{A.~J.} \bibnamefont{Berkley}},
  \bibinfo{author}{\bibfnamefont{R.}~\bibnamefont{Harris}},
  \bibinfo{author}{\bibfnamefont{J.~P.} \bibnamefont{Hilton}},
  \bibinfo{author}{\bibfnamefont{T.}~\bibnamefont{Lanting}},
  \bibinfo{author}{\bibfnamefont{A.~J.} \bibnamefont{Przybysz}},
  \bibnamefont{et~al.}, \bibinfo{journal}{IEEE Transactions on Applied
  Superconductivity} \textbf{\bibinfo{volume}{24}}, \bibinfo{pages}{1}
  (\bibinfo{year}{2014}).

\bibitem[{\citenamefont{Lechner et~al.}(2015)\citenamefont{Lechner, Hauke, and
  Zoller}}]{lechner2015quantum}
\bibinfo{author}{\bibfnamefont{W.}~\bibnamefont{Lechner}},
  \bibinfo{author}{\bibfnamefont{P.}~\bibnamefont{Hauke}}, \bibnamefont{and}
  \bibinfo{author}{\bibfnamefont{P.}~\bibnamefont{Zoller}},
  \bibinfo{journal}{Science advances} \textbf{\bibinfo{volume}{1}},
  \bibinfo{pages}{e1500838} (\bibinfo{year}{2015}).

\bibitem[{\citenamefont{Rocchetto et~al.}(2016)\citenamefont{Rocchetto,
  Benjamin, and Li}}]{rocchetto2016stabilizers}
\bibinfo{author}{\bibfnamefont{A.}~\bibnamefont{Rocchetto}},
  \bibinfo{author}{\bibfnamefont{S.~C.} \bibnamefont{Benjamin}},
  \bibnamefont{and} \bibinfo{author}{\bibfnamefont{Y.}~\bibnamefont{Li}},
  \bibinfo{journal}{Science advances} \textbf{\bibinfo{volume}{2}},
  \bibinfo{pages}{e1601246} (\bibinfo{year}{2016}).

\bibitem[{\citenamefont{Chancellor et~al.}(2016)\citenamefont{Chancellor,
  Zohren, Warburton, Benjamin, and Roberts}}]{chancellor2016direct}
\bibinfo{author}{\bibfnamefont{N.}~\bibnamefont{Chancellor}},
  \bibinfo{author}{\bibfnamefont{S.}~\bibnamefont{Zohren}},
  \bibinfo{author}{\bibfnamefont{P.~A.} \bibnamefont{Warburton}},
  \bibinfo{author}{\bibfnamefont{S.~C.} \bibnamefont{Benjamin}},
  \bibnamefont{and} \bibinfo{author}{\bibfnamefont{S.}~\bibnamefont{Roberts}},
  \bibinfo{journal}{Scientific Reports} \textbf{\bibinfo{volume}{6}}
  (\bibinfo{year}{2016}).

\bibitem[{\citenamefont{Kielpinski et~al.}(2002)\citenamefont{Kielpinski,
  Monroe, and Wineland}}]{kielpinski2002architecture}
\bibinfo{author}{\bibfnamefont{D.}~\bibnamefont{Kielpinski}},
  \bibinfo{author}{\bibfnamefont{C.}~\bibnamefont{Monroe}}, \bibnamefont{and}
  \bibinfo{author}{\bibfnamefont{D.~J.} \bibnamefont{Wineland}},
  \bibinfo{journal}{Nature} \textbf{\bibinfo{volume}{417}},
  \bibinfo{pages}{709} (\bibinfo{year}{2002}).

\bibitem[{\citenamefont{Loss and DiVincenzo}(1998)}]{loss1998quantum}
\bibinfo{author}{\bibfnamefont{D.}~\bibnamefont{Loss}} \bibnamefont{and}
  \bibinfo{author}{\bibfnamefont{D.~P.} \bibnamefont{DiVincenzo}},
  \bibinfo{journal}{Physical Review A} \textbf{\bibinfo{volume}{57}},
  \bibinfo{pages}{120} (\bibinfo{year}{1998}).

\bibitem[{\citenamefont{Wecker et~al.}(2014)\citenamefont{Wecker, Bauer, Clark,
  Hastings, and Troyer}}]{wecker2014gate}
\bibinfo{author}{\bibfnamefont{D.}~\bibnamefont{Wecker}},
  \bibinfo{author}{\bibfnamefont{B.}~\bibnamefont{Bauer}},
  \bibinfo{author}{\bibfnamefont{B.~K.} \bibnamefont{Clark}},
  \bibinfo{author}{\bibfnamefont{M.~B.} \bibnamefont{Hastings}},
  \bibnamefont{and} \bibinfo{author}{\bibfnamefont{M.}~\bibnamefont{Troyer}},
  \bibinfo{journal}{Physical Review A} \textbf{\bibinfo{volume}{90}},
  \bibinfo{pages}{022305} (\bibinfo{year}{2014}).

\bibitem[{\citenamefont{Burgarth et~al.}(2013)\citenamefont{Burgarth, Facchi,
  Giovannetti, Nakazato, Pascazio, and Yuasa}}]{burgarth2013non}
\bibinfo{author}{\bibfnamefont{D.}~\bibnamefont{Burgarth}},
  \bibinfo{author}{\bibfnamefont{P.}~\bibnamefont{Facchi}},
  \bibinfo{author}{\bibfnamefont{V.}~\bibnamefont{Giovannetti}},
  \bibinfo{author}{\bibfnamefont{H.}~\bibnamefont{Nakazato}},
  \bibinfo{author}{\bibfnamefont{S.}~\bibnamefont{Pascazio}}, \bibnamefont{and}
  \bibinfo{author}{\bibfnamefont{K.}~\bibnamefont{Yuasa}},
  \bibinfo{journal}{Physical Review A} \textbf{\bibinfo{volume}{88}},
  \bibinfo{pages}{042107} (\bibinfo{year}{2013}).

\bibitem[{\citenamefont{Monroe and Kim}(2013)}]{monroe2013scaling}
\bibinfo{author}{\bibfnamefont{C.}~\bibnamefont{Monroe}} \bibnamefont{and}
  \bibinfo{author}{\bibfnamefont{J.}~\bibnamefont{Kim}},
  \bibinfo{journal}{Science} \textbf{\bibinfo{volume}{339}},
  \bibinfo{pages}{1164} (\bibinfo{year}{2013}).

\bibitem[{\citenamefont{Nickerson et~al.}(2014)\citenamefont{Nickerson,
  Fitzsimons, and Benjamin}}]{nickerson2014freely}
\bibinfo{author}{\bibfnamefont{N.~H.} \bibnamefont{Nickerson}},
  \bibinfo{author}{\bibfnamefont{J.~F.} \bibnamefont{Fitzsimons}},
  \bibnamefont{and} \bibinfo{author}{\bibfnamefont{S.~C.}
  \bibnamefont{Benjamin}}, \bibinfo{journal}{Physical Review X}
  \textbf{\bibinfo{volume}{4}}, \bibinfo{pages}{041041} (\bibinfo{year}{2014}).

\bibitem[{\citenamefont{Benjamin}(2015)}]{benjamin2015performance}
\bibinfo{author}{\bibfnamefont{S.}~\bibnamefont{Benjamin}},
  \bibinfo{journal}{arXiv preprint arXiv:1509.00667}  (\bibinfo{year}{2015}).

\bibitem[{\citenamefont{Van~Dam et~al.}(2001)\citenamefont{Van~Dam, Mosca, and
  Vazirani}}]{van2001powerful}
\bibinfo{author}{\bibfnamefont{W.}~\bibnamefont{Van~Dam}},
  \bibinfo{author}{\bibfnamefont{M.}~\bibnamefont{Mosca}}, \bibnamefont{and}
  \bibinfo{author}{\bibfnamefont{U.}~\bibnamefont{Vazirani}}, in
  \emph{\bibinfo{booktitle}{Foundations of Computer Science, 2001. Proceedings.
  42nd IEEE Symposium on}} (\bibinfo{organization}{IEEE},
  \bibinfo{year}{2001}), pp. \bibinfo{pages}{279--287}.

\bibitem[{\citenamefont{Kitaev}(1995)}]{kitaev1995quantum}
\bibinfo{author}{\bibfnamefont{A.~Y.} \bibnamefont{Kitaev}},
  \bibinfo{journal}{arXiv preprint quant-ph/9511026}  (\bibinfo{year}{1995}).

\end{thebibliography}

\end{document}